\newtheorem{theorem}{Theorem}[section]
\newtheorem{lemma}[theorem]{Lemma}
\theoremstyle{remark}
\newtheorem{remark}{Remark}
\newcommand\fun{\mathrel\rightarrow}
\newcommand\subst[2]{[#1/#2]}
\newcommand\rebind[2]{[#1/#2]}
\newcommand\Imp{\mathrel\Rightarrow}
\newcommand\Iff{\mathrel\Leftrightarrow}
\newcommand\eps\varepsilon
\newcommand\subs\subseteq
\newcommand\set[2]{\left\{#1\mid #2\right\}}
\newcommand\mathname[1]{\ensuremath{\mathsf{#1}}}
\newcommand\Set{\mathname{Set}}
\newcommand\sem[1]{\llbracket{#1}\rrbracket}
\newcommand\cset[1]{\{#1\}}
\newcommand\FV[1]{\mathname{FV}(#1)}
\newcommand\CFname{\mathname{CF}}
\newcommand\CF[1]{\CFname\,#1}
\newcommand\Tname{\mathname{T}}
\newcommand\T[1]{\Tname\,#1}
\newcommand\KAF{\mathname{KAF}}
\newcommand\KAR{\mathname{KAR}}
\newcommand\KAG{\mathname{KAG}}
\newcommand{\fhcomment}[1]{\textcolor{red}{[\textbf{Comment (FH)}: {#1}]}}
\newcommand{\dkcomment}[1]{\textcolor{blue}{[\textbf{Comment (DK)}: {#1}]}}
\renewcommand{\fhcomment}[1]{}
\renewcommand{\dkcomment}[1]{}
\begin{document}

\title{Infinitary Axiomatization of the Equational Theory of Context-Free Languages}
\author{Niels Bj\o rn Bugge Grathwohl
\institute{Department of Computer Science (DIKU)\\
University of Copenhagen\\
Universitetsparken 5\\
DK-2100 Copenhagen, Denmark}
\email{bugge@diku.dk}
\and
Fritz Henglein
\institute{Department of Computer Science (DIKU)\\
University of Copenhagen\\
Universitetsparken 5\\
DK-2100 Copenhagen, Denmark}
\email{henglein@diku.dk}
\and
Dexter Kozen
\institute{Department of Computer Science\\
Cornell University\\
Ithaca, NY 14853-7501, USA}
\email{kozen@cs.cornell.edu}
}
\def\titlerunning{Infinitary Axiomatization of the Equational Theory of Context-Free Languages}
\def\authorrunning{N.B.B.~Grathwohl, F.~Henglein, D.~Kozen}

\maketitle

\begin{abstract}
We give a natural complete infinitary axiomatization of the equational theory of the context-free languages, answering a question of Lei\ss\ (1992).
\end{abstract}

\section{Introduction}
\label{sec:intro}

Algebraic reasoning about programming language constructs has been a popular research topic for many years. At the propositional level, the theory of flowchart programs and linear recursion are well handled by such systems as 
\fhcomment{Write: affine instead of linear}%
\dkcomment{I believe linear recursion is standard terminology.\\
http://www.csse.monash.edu.au/~lloyd/tildeAlgDS/Recn/Linear/ \\
http://mitpress.mit.edu/sicp/chapter1/node12.html \\
http://www.sparknotes.com/cs/recursion/whatisrecursion/section2.rhtml}%
\fhcomment{You win.  How did we ever end up calling something like $a x + b$, with $b \neq 0$, a linear function...}%
Kleene algebra and iteration theories, systems that characterize the equational theory of the regular sets. To handle more general forms of recursion including procedures with recursive calls, one must extend to the context-free languages, and here the situation is less well understood. One reason for this is that, unlike the equational theory of the regular sets, the equational theory of the context-free languages is not recursively enumerable. This has led some researchers to declare its complete axiomatization an insurmountable task \cite{leiss92b}.

Whereas linear recursion can be characterized with the star operator $^\star$ of Kleene algebra or the dagger operation $^\dagger$ of iteration theories, the theory of context-free languages requires a more general fixpoint operator $\mu$. The characterization of the context-free languages as least solutions of algebraic inequalities involving $\mu$ goes back to a 1971 paper of Gruska \cite{Gruska71}. More recently, several researchers have given equational axioms for semirings with $\mu$ and have developed fragments of the equational theory of context-free languages \cite{cou86,esle2002,EsikLeiss05,Hopkins08a,Hopkins08b,leiss92b}.

In this paper we consider another class of models satisfying a condition called \emph{$\mu$-continuity} analogous to the star-continuity condition of Kleene algebra:
\begin{align*}
a(\mu x.p)b &= \sum_{n\geq 0}\,a(nx.p)b,
\end{align*}
where the summation symbol denotes supremum with respect to the natural order in the semiring, and
\begin{align*}
0x.p &= 0 & (n{+}1)x.p &= p\subst x{nx.p}.
\end{align*}
This infinitary axiom combines the assertions that $\mu x.p$ is the supremum of its finite approximants $nx.p$ and that multiplication in the semiring is continuous with respect to these suprema. Analogous to a similar result for star-continuous Kleene algebra, we show that all context-free languages over a $\mu$-continuous idempotent semiring have suprema. Our main result is that the $\mu$-continuity condition, along with the axioms of idempotent semirings, completely axiomatize the equational theory of the context-free languages. This is the first completeness result for the equational theory of the context-free languages, answering a question of Lei\ss\ \cite{leiss92b}.

\subsection{Related Work}
 
Courcelle \cite{cou86} investigates \emph{regular systems}, finite systems of fixpoint equations over first-order terms over a ranked alphabet with a designated symbol $+$ denoting set union, thereby restricting algebras to power set algebras.  He stages their interpretation by first interpreting recursion over first-order terms as infinite trees, essentially as the final object in the corresponding coalgebra, then interpreting the signature symbols in $\omega$-complete algebras.  He provides soundness and completeness for transforming regular systems that preserve all solutions and soundness, but not completeness for preserving their least solutions.  Courcelle's approach is syntactic since it employs unfolding of terms in fixpoint equations.

Lei\ss\ \cite{leiss92b} investigates three classes of idempotent semirings with a syntactic least fixpoint operator $\mu$. The three classes are called \KAF, \KAR, and \KAG\ in increasing order of specificity. All these classes are assumed to satisfy the fundamental \emph{Park axioms}
\begin{align*}
p\subst x{\mu x.p} &\leq \mu x.p & p \leq x\ &\Imp\ \mu x.p \leq x, 
\end{align*}
which say that $\mu x.p$ is the least solution of the inequality $p \leq x$. The classes \KAR\ and \KAG\ further assume
\begin{align*}
\mu x.(b + ax) &= \mu x.(1 + xa)\cdot b & \mu x.(b + xa) &= b\cdot\mu x.(1 + ax)
\end{align*}
and
\begin{align*}
\mu x.(s + rx) &= \mu x.(\mu y.(1 + yr)\cdot s) & \mu x.(s + xr) &= \mu x.(s\cdot \mu y.(1 + ry)),
\end{align*}
respectively. These axioms can be viewed as imposing continuity properties of the semiring operators with respect to $\mu$. All standard interpretations, including the context-free languages over an alphabet $X$, are continuous and satisfy the \KAG\ axioms. \'Esik and Lei\ss\ \cite{esle2002,EsikLeiss05} show that conversion to Greibach normal form can be performed purely algebraically under these assumptions.

\'Esik and Kuich \cite{esik2007modern} introduce \emph{continuous semirings}, which are required to have suprema for all directed sets, and they
employ domain theory to solve polynomial fixpoint equations.  Idempotent  continuous semirings are $\mu$-continuous Chomsky algebras as 
defined here, but not conversely.  As we shall prove, the family of context-free languages over any alphabet constitutes a $\mu$-continuous Chomsky algebra.  It is not a continuous semiring, however, since the union of context-free languages is not necessarily context-free.   

\section{Chomsky Algebras}
\label{sec:CF}

\subsection{Polynomials}

Let $(C,\,+,\,\cdot,\,0,\,1)$ be an idempotent semiring and $X$ a fixed set of variables. A \emph{polynomial over indeterminates $X$ with coefficients in $C$} is an element of $C[X]$, where $C[X]$ is the coproduct (direct sum) 
\fhcomment{Add: ``categorical'' before coproduct}%
\dkcomment{Why? What other kind is there?}%
\fhcomment{Good question.  I was thinking of categorical sum.}%
of $C$ and the free idempotent semiring on generators $X$ in the category of idempotent semirings. For example, if $a,b,c\in C$ and $x,y\in X$, then the following are polynomials:
\begin{align*}
&& 0 && a && axbycx + 1 && ax^2byx + by^2xc && 1 + x + x^2 + x^3
\end{align*}
The elements of $C[X]$ are not purely syntactic, as they satisfy all the equations of idempotent semirings and identities of $C$. For example, if 
$a^2=b^2=1$ in $C$, then
\begin{align*}
(axa + byb)^2 &= ax^2a + axabyb + bybaxa + by^2b.
\end{align*}
Every polynomial can be written as a finite sum of monomials of the form
\begin{align*}
a_0x_0a_1x_1\cdots a_{n-1}x_{n-1}a_n,
\end{align*}
where each $a_i\in C-\{0\}$ and $x_i\in X$. The \emph{free variables} of such an expression $p$ are the elements of $X$ appearing in it and are denoted $\FV p$. The representation is unique up to associativity of multiplication and associativity, commutativity, and idempotence of addition.

\fhcomment{Problem: ``Free variables'' not well-defined for $C[X]$.
Proposal: Change from $C[X]$ to introduction of polynomial expressions 
$\E X$, 
interpretation over polynomial expressions, analogous to $\T X$ of
Section 2.4.  We introduce ``free'' $\mu$-expressions in this version of the paper and the domain of $\sigma$ need not be a semiring anyway; gives simpler presentation since $C[X]$ not required, only expressions $\E X$ and their extension $\T X$ in Section 2.4.}%
\dkcomment{Actually, I believe it is well defined if you write the polynomial as a sum of monomials as I have described.}%
\fhcomment{I see, thanks. Should we add that every polynomial can be written \emph{uniquely} as a sum over a set of monomials?}%

\subsection{Polynomial Functions and Evaluation}

Let $C[X]$ be the semiring of polynomials over indeterminates $X$ and let $D$ be an idempotent semiring containing $C$ as a subalgebra.
By general considerations of universal algebra, any 
valuation $\sigma:X\fun D$ extends uniquely to a semiring homomorphism $\hat\sigma:C[X]\fun D$ preserving $C$ pointwise.
Formally, the functor $X\mapsto C[X]$ is left adjoint to a forgetful functor that takes an idempotent semiring $D$ to its underlying set.
Intuitively, $\hat\sigma$ is the \emph{evaluation morphism} that evaluates a polynomial at the point $\sigma\in D^X$.
Thus each polynomial $p\in C[X]$ determines a \emph{polynomial function} $\sem p:D^X\fun D$, where $\sem p(\sigma)=\hat\sigma(p)$.

The set of all functions $D^X\fun D$ with the pointwise semiring operations is itself an idempotent semiring with $C$ as an embedded subalgebra under the embedding $c\mapsto\lambda\sigma.c$. The map $\sem\cdot:C[X]\fun (D^X\fun D)$ is actually $\hat\tau$, where  $\tau(x) = \lambda f.f(x)$.

For the remainder of the paper, we write $\sigma$ for $\hat\sigma$, as there is no longer any need to distinguish them.

\subsection{Algebraic Closure and Chomsky Algebras}
\label{sec:Chomsky}

A \emph{system of polynomial inequalities over $C$} is a set
\begin{align}
p_1\leq x_1,\ p_2\leq x_2,\ \ldots,\ p_n\leq x_n\label{eq:sysineq}
\end{align}
where $x_i\in X$ and $p_i\in C[X]$, $1\leq i\leq n$. A \emph{solution} of \eqref{eq:sysineq} in $C$ is a valuation $\sigma:X\fun C$ such that $\sigma(p_i)\leq \sigma(x_i)$, $1\leq i\leq n$. The solution $\sigma$ is a \emph{least solution} if $\sigma\leq\tau$ pointwise for any other solution $\tau$. If a least solution exists, then it is unique.

An idempotent semiring $C$ is said to be \emph{algebraically closed} if every finite system of polynomial inequalities over $C$ has a least solution in $C$.

The category of \emph{Chomsky algebras} consists of algebraically closed idempotent semirings along with semiring homomorphisms that preserve least solutions of systems of polynomial inequalities.
\fhcomment{Add: Definition of what exactly this means.}%
\dkcomment{I didn't do this because it is more or less obvious what we mean, but clunky (and not too instructive) to write down formally, and we don't really use it anywhere. It would have been moot if we had started off with $\mu$-expressions in our language, because homomorphisms would have to preserve $\mu$. But I like our formulation better because it is closer in spirit to the definition of algebraically closed fields. The corresponding statement in fields is that automorphisms preserve roots of polynomials. Formally, a Chomsky algebra morphism is a semiring homomorphism $h:C\fun D$ such that if $h':C[X]\fun D[X]$ is the unique extension of $h$ such that $h'(x)=x$, and if $\sigma$ is the least solution to the system $p_i\leq x_i$, $\onein$, then $h\circ\sigma$ is the least solution to the system $h'(p_i)\leq x_i$, $\onein$.}%
\fhcomment{Okay.  A reader might be interested anyway, given 
it's mentioned to start with.  Going with both $\E X$ and $\T X$ as expressions in this paper could 
save space, I believe, and leave an proper algebraic treatment of $\mu$ as
extension to $C[X]$ as something for later.  What do you think?}%

The canonical example of a Chomsky algebra is the family of context-free languages $\CF X$ over an alphabet $X$. A system of polynomial inequalities \eqref{eq:sysineq} can be regarded as context-free grammar, and the least solution of the system is the context-free language generated by the grammar. For example, the set of strings in $\{a,b\}^\star$ with equally many $a$'s and $b$'s is generated by the grammar
\begin{align}
S &\fun \eps \mid aB \mid bA & A &\fun aS \mid bAA & B &\fun bS \mid aBB,\label{ex:Greibach1}
\end{align}
which corresponds to the system
\begin{align}
1+aB+bA &\leq S & aS+bAA &\leq A & bS+aBB &\leq B,\label{ex:Greibach2}
\end{align}
where the symbols $a,b$ are interpreted as the singleton sets $\cset a,\cset b$, the symbols $S,A,B$ are variables ranging over sets of strings, and the semiring operations $+$, $\cdot$, $0$, and $1$ are interpreted as set union, set product $AB = \set{xy}{x\in A,\ y\in B}$, $\emptyset$, and $\cset\eps$, respectively.
 
\subsection{$\mu$-Expressions}
\label{sec:mu-exp}

Let $X$ be a set of indeterminates.
Lei\ss\ \cite{leiss92b} and \'Esik and Lei\ss\ \cite{esle2002,EsikLeiss05} consider \emph{$\mu$-expressions} defined by the grammar
\begin{align*}
t &::= x \mid t + t \mid t \cdot t \mid 0 \mid 1 \mid \mu x. t
\end{align*}
where $x\in X$. These expressions provide a syntax with which least solutions of polynomial systems can be named. Scope, bound and free occurrences of variables, $\alpha$-conversion, and safe substitution are defined as usual (see e.g.\ \cite{Barendregt84}). We denote by $t\subst xu$ the result of substituting $u$ for all free occurrences of $x$ in $t$, renaming bound variables as necessary to avoid capture. Let $\T X$ denote the set of $\mu$-expressions over indeterminates $X$. 

Let $C$ be a Chomsky algebra and $X$ a set of indeterminates.
An \emph{interpretation} over $C$ is a map
\fhcomment{Not a semiring homomorphism, since domain $\T X$ not a semiring.}%
\dkcomment{OK, how about ``homomorphism with respect to the semiring operators''}%
\fhcomment{OK.}%
$\sigma:\T X\fun C$ that is a homomorphism with respect to the semiring operations and 
such that
\begin{align}
\sigma(\mu x. t) &= \text{the least $a\in C$ such that $\sigma\rebind xa(t) \leq a$},\label{ax:mu}
\end{align}
where $\sigma\rebind xa$ denotes $\sigma$ with $x$ rebound to $a$.
The element $a$ exists and is unique: Informally,
each $\mu$-expression $t$ can be associated with a system of polynomial inequalities such that $\sigma(t)$ is a designated component of its least
solution, which exists by algebraic closure.
\fhcomment{Change to: by Beki\'c's Theorem.}%
\dkcomment{No, it's by algebraic closure. Nothing to do with Beki\'c.}%
\fhcomment{The $p$ here is from $\T X$, which may contain occurrences of $\mu$.  The definition of algebraic closure only requires least solutions for polynomials from $C[X]$.  Isn't it a bit brief to say that this, existence of least solutions, also holds for $\T X$ by algebraic closure (which I'd read as ``follows immediately from the definition of algebraic closure'')?}%

Every set map $\sigma:X\fun C$ extends uniquely to such a homomorphism. An interpretation $\sigma$ \emph{satisfies} the equation $s=t$ if $\sigma(s)=\sigma(t)$ and satisfies the inequality $s\leq t$ if $\sigma(s)\leq\sigma(t)$. All interpretations over Chomsky algebras satisfy the axioms of idempotent semirings, $\alpha$-conversion (renaming of bound variables), and the \emph{Park axioms}
\fhcomment{Add: Reference to where they are introduced.}%
\dkcomment{??}%
\fhcomment{First or canonical piece of literature that contains the 
Park axioms.  Presumably something Park has written and that subsequently was called Park by somebody else in another paper.  Couldn't find those references, and I'm curious...}%
\begin{align}
t\subst x{\mu x.t} &\leq \mu x.t & t\leq x\ &\Imp\ \mu x.t \leq x.\label{ax:Park}
\end{align}
The Park axioms say intuitively that $\mu x.t$ is the least solution of the single inequality $t\leq x$.
\fhcomment{Change to: names the least solution}%
\dkcomment{Ah, the classic dilemma of use vs. mention. Technically you are right but it's a common abuse in algebra and logic. I personally would prefer to stick with ``is'' but willing to go with ``names'' if it makes you happy. But, realize, if we do this, we technically should change the next statement to ``$p\subst x{\mu x.p}$ and $\mu x.p$ name the same element under any interpretation.''}%
\fhcomment{OK.  I was suggesting ``names'' only as linguistic connection to the occurrence of ``name'' in ``to name least solutions'' below.}%
It follows easily that
\begin{align}
t\subst x{\mu x.t} &= \mu x.t.\label{ax:Park1}
\end{align}

Thus Chomsky algebras are essentially the ordered Park $\mu$-semirings of \cite{EsikLeiss05} with the additional restriction that $+$ is idempotent and the order is the natural order $x\leq y \Iff x + y = y$.

\subsection{Beki\'c's Theorem}
\label{sec:Bekic}

It is well known that the ability to name least solutions of single inequalities with $\mu$ gives the ability to name least solutions of all finite systems of inequalities. This is known as Beki\'c's theorem \cite{bekic1984}. The construction is analogous to the definition of $M^\star$ for a matrix $M$ over a Kleene algebra. 
\fhcomment{Add: Formulate Beki\'c's Theorem: Algebraic closure for systems of inequations over $\E X$, not just $\P X$.}%
\dkcomment{I suggest not wasting space with this. It is well known and done in detail in Winskel's book and \'Esik and Leiss, and we give the references below. We also do the 2x2 example, which is more instructive than any general formulation would be.}%
\fhcomment{OK.}%

Beki\'c's theorem can be proved by regarding a system of inequalities as a single inequality on a Cartesian product, partitioning into two systems of smaller dimension, then applying the result for the $2\times 2$ case inductively. The $2\times 2$ system
\begin{align*}
p(x,y) &\leq x & q(x,y) &\leq y
\end{align*}
has least solution $a_0,b_0$, where
\begin{align*}
a(y) &= \mu x.p(x,y) & b_0 &= \mu y.q(a(y),y) & a_0 &= a(b_0),
\end{align*}
as can be shown using the Park axioms \eqref{ax:Park}; see \cite{Winskel93} or \cite{EsikLeiss05} for a comprehensive treatment.

For example, in the context-free languages, the set of strings in $\{a,b\}^\star$ with equally many $a$'s and $b$'s is represented by the term
\begin{align}
\mu S.(1+a\cdot \mu B.(bS+aBB) + b\cdot \mu A.(aS+bAA))\label{eq:cfexample}
\end{align}
obtained from the system \eqref{ex:Greibach1} by this construction.

\subsection{$\mu$-Continuity}
\label{sec:continuity}

Let $nx.t$ be an abbreviation for the $n$-fold composition of $t$ applied to $0$, defined inductively by
\begin{align*}
0x.t &= 0 & (n{+}1)x.t &= t\subst x{nx.t}.
\end{align*}
A Chomsky algebra is called \emph{$\mu$-continuous} if it satisfies the \emph{$\mu$-continuity axiom}:
\begin{align}
a(\mu x.t)b &= \sum_{n\geq 0}\,a(nx.t)b,\label{ax:mucont}
\end{align}
where the summation symbol denotes supremum with respect to the natural order $x \leq y 
\Iff x+y=y$. Note that the supremum of $a$ and $b$ is $a+b$.

\newcommand\canon[1]{L_{#1}}

The family $\CF X$ of context-free languages over an alphabet $X$ forms a $\mu$-continuous Chomsky algebra. The \emph{canonical interpretation} over this algebra is $\canon X:\T X\fun\CF X$, where
\begin{align}
\canon X(x) &= \{x\} &
\canon X(t + u) &= \canon X(t)\cup \canon X(u)\nonumber\\
\canon X(0) &=\emptyset &
\canon X(tu) &= \set{xy}{x\in \canon X(t),\ y\in \canon X(u)}\label{ax:Ldef}\\
\canon X(1) &= \cset{\eps} &
\canon X(\mu x.t) &= \bigcup_{n\geq 0} \canon X(nx.t).\nonumber
\end{align}
Under $\canon X$, every term in $\T X$ represents a context-free language over its free variables (note that $x$ is not free in $nx.t$). In the example \eqref{eq:cfexample} of \S\ref{sec:Bekic}, the free variables are $a,b$ and the bound variables are $S,A,B$, corresponding to the terminal and nonterminal symbols, respectively, of the grammar \eqref{ex:Greibach1} of \S\ref{sec:Chomsky}.

\subsection{Relation to Other Axiomatizations}

In this section we show that the various axiomatizations considered in \cite{esle2002,EsikLeiss05,leiss92b} are valid in all $\mu$-continuous Chomsky algebras.

A \emph{$\mu$-semiring} \cite{EsikLeiss05} is a semiring $(A,+,\cdot,0,1)$ satisfying the \emph{$\mu$-congruence} and \emph{substitution} properties:
\begin{align*}
t=u &\Imp \mu x.t=\mu x.u & \sigma(t\subst yu) = \sigma\rebind y{\sigma(u)}(t).
\end{align*}
    Idempotence is not assumed.
\begin{lemma}\label{lem:chomskyismusemi}
    Every Chomsky algebra is a $\mu$-semiring.
\end{lemma}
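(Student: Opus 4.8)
The plan is to verify the two defining properties of a $\mu$-semiring --- $\mu$-congruence and the substitution law --- directly from the characterization \eqref{ax:mu} of interpretations over a Chomsky algebra $C$, namely that $\sigma(\mu x.t)$ is the least $a\in C$ with $\sigma\rebind xa(t)\leq a$. (The $\mu$ operation witnessing the $\mu$-semiring structure is of course the one supplied by these interpretations.)

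For $\mu$-congruence, the key observation is that $\sigma(\mu x.t)$ depends on $t$ only through the function $a\mapsto\sigma\rebind xa(t)$ from $C$ to $C$. So suppose the equation $t=u$ holds in $C$, i.e.\ $\rho(t)=\rho(u)$ for every interpretation $\rho$. Given any interpretation $\sigma$ and any $a\in C$, the rebinding $\sigma\rebind xa$ is again an interpretation (it is the unique interpretation extending the indicated valuation), so $\sigma\rebind xa(t)=\sigma\rebind xa(u)$. Hence the sets $\set{a\in C}{\sigma\rebind xa(t)\leq a}$ and $\set{a\in C}{\sigma\rebind xa(u)\leq a}$ are equal and therefore have the same least element; that is, $\sigma(\mu x.t)=\sigma(\mu x.u)$ for every $\sigma$, which is exactly the assertion $\mu x.t=\mu x.u$.

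For the substitution law $\sigma(t\subst yu)=\sigma\rebind y{\sigma(u)}(t)$, I would proceed by structural induction on $t$. The cases $t=0$, $t=1$, $t=t_1+t_2$, $t=t_1\cdot t_2$ are immediate since $\sigma$ and $\sigma\rebind y{\sigma(u)}$ are semiring homomorphisms and substitution commutes with the semiring constructors; for $t=y$ both sides equal $\sigma(u)$, and for a variable $t=z\neq y$ both sides equal $\sigma(z)$. The only case with real content is $t=\mu x.t'$. By $\alpha$-conversion we may assume $x\notin\{y\}\cup\FV u$, so that $(\mu x.t')\subst yu=\mu x.(t'\subst yu)$. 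Then $\sigma\big(\mu x.(t'\subst yu)\big)$ is the least $a$ with $\sigma\rebind xa(t'\subst yu)\leq a$; applying the induction hypothesis to the interpretation $\sigma\rebind xa$ rewrites this condition as $\sigma\rebind xa\rebind y{\sigma\rebind xa(u)}(t')\leq a$, and since $x\notin\FV u$ we have $\sigma\rebind xa(u)=\sigma(u)$. Because $x\neq y$, the two rebindings commute, so the condition becomes $\sigma\rebind y{\sigma(u)}\rebind xa(t')\leq a$, whose least solution $a$ is by definition $\sigma\rebind y{\sigma(u)}(\mu x.t')$. This closes the induction.

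I do not expect a serious obstacle: both properties reduce to the single fact that $\mu$ is interpreted as a least pre-fixpoint, which visibly depends only on the function denoted by the body. The one spot that needs care is the $\mu$ case of the substitution law, where the side conditions $x\neq y$ and $x\notin\FV u$ are precisely what is needed to push the substitution under $\mu$ and to commute the rebindings of $x$ and $y$ --- the familiar substitution-lemma bookkeeping, carried out here in the setting of interpretations into a Chomsky algebra.
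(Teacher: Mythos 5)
Your proof is correct and follows essentially the same route as the paper: $\mu$-congruence falls out immediately because $\sigma(\mu x.t)$ depends only on the map $a\mapsto\sigma\rebind xa(t)$, and the substitution property is proved by structural induction, with the $\mu$ case handled by pushing the substitution under the binder (after arranging $x\neq y$, $x\notin\FV u$), applying the induction hypothesis to $\sigma\rebind xa$, and commuting the two rebindings — exactly the chain of equalities in the paper's proof. The only difference is presentational: you spell out the $\mu$-congruence argument and the trivial induction cases that the paper dismisses as immediate.
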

\begin{proof}
    The $\mu$-congruence property is immediate from the definition of the $\mu$ operation \eqref{ax:mu}. 
    The substitution property is a general property of systems with variable bindings; see \cite[Lemma 5.1.5]{Barendregt84}. It can be proved by induction. For the case of $\mu x.t$, we assume without loss of generality that $y\neq x$ (otherwise there is nothing to prove) and that $x$ is not free in $u$.
\begin{align*}
\sigma((\mu x.t)\subst yu) &= \sigma(\mu x.(t\subst yu))\\
&= \text{least $a$ such that $\sigma\rebind xa(t\subst yu) \leq a$}\\
&= \text{least $a$ such that $\sigma\rebind xa\rebind y{\sigma(u)}(t) \leq a$}\\
&= \text{least $a$ such that $\sigma\rebind y{\sigma(u)}\rebind xa(t) \leq a$}\\
&= \sigma\rebind y{\sigma(u)}(\mu x.t).
\end{align*}
\end{proof}

We now consider various axioms proposed in \cite{leiss92b}.

\begin{lemma}\label{lem:samelists}
In all $\mu$-continuous Chomsky algebras,
\begin{align*}
\mu x.(1+ax) = \mu x.(1+xa),\quad x\not\in\FV a.
\end{align*}
\end{lemma}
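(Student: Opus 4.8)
The plan is to apply the $\mu$-continuity axiom \eqref{ax:mucont} to both sides, taking both multiplicands to be $1$, so that $\mu x.(1+ax) = \sum_{n\geq 0} nx.(1+ax)$ and $\mu x.(1+xa) = \sum_{n\geq 0} nx.(1+xa)$, with the sums denoting suprema in the natural order. It then suffices to show that the two sequences of finite approximants agree termwise, i.e.\ that $nx.(1+ax) = nx.(1+xa)$ for every $n$; equal sequences have equal suprema, which yields the claim.

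I would prove the termwise equality by induction on $n$, in fact establishing the stronger closed form that both $nx.(1+ax)$ and $nx.(1+xa)$ equal $1 + a + a^2 + \cdots + a^{n-1}$. The base case $n = 0$ is $0 = 0$ by definition of $0x.t$. For the inductive step, unfold using $(n{+}1)x.t = t\subst x{nx.t}$; since $x$ is not free in $nx.t$, the substitution is just replacement of $x$. Thus $(n{+}1)x.(1+ax) = 1 + a\cdot(1 + a + \cdots + a^{n-1})$, which equals $1 + a + \cdots + a^n$ by left distributivity, and symmetrically $(n{+}1)x.(1+xa) = 1 + (1 + a + \cdots + a^{n-1})\cdot a = 1 + a + \cdots + a^n$ by right distributivity. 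Hence the two approximants coincide at every stage.

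There is essentially no hard step here: the lemma reduces, via $\mu$-continuity, to the observation that $1+ax$ and $1+xa$ have literally the same finite unfoldings, so it is a routine distributivity computation plus an easy induction. The only points requiring care are bookkeeping ones: the hypothesis $x\notin\FV a$ is what makes $a$ inert under all the substitutions involved (and guarantees the approximants contain no free $x$), and the $\mu$-continuity hypothesis is essential, since without it the finite approximants need not determine $\mu x.(\cdot)$.
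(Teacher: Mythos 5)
Your proof is correct and follows essentially the same route as the paper: instantiate $\mu$-continuity (with $a=b=1$) to reduce the claim to termwise equality of the finite approximants, then show by induction that both $nx.(1+ax)$ and $nx.(1+xa)$ equal the same finite geometric sum of powers of $a$. Your closed form $1+a+\cdots+a^{n-1}$ is in fact slightly more carefully indexed than the paper's $\sum_{i=0}^{n}a^i$, but the argument is the same.
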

\begin{proof}
By $\mu$-continuity, it suffices to show that $nx.(1+ax)=nx.(1+xa)$ for all
$n$. We show by induction that for all $n$, $nx.(1+ax)=nx.(1+xa)=\sum_{i=0}^n a^i$.
The basis $n=0$ is trivial. For the inductive case,
\begin{align*}
(n{+}1)x.(1+ax) &= 1+a(nx.(1+ax)) 
= 1+a(\textstyle\sum_{i=0}^na^i) 
= \textstyle\sum_{i=0}^{n+1}a^i,
\end{align*}
and this is equal to $(n{+}1)x.(1+xa)$ by a symmetric argument.
\end{proof}

\begin{lemma}\label{lem:listdistr}
The following two equations hold in all $\mu$-continuous Chomsky algebras:
\begin{align*}
a(\mu x.(1+xb)) &= \mu x.(a+xb) & (\mu x.(1+bx))a &= \mu x.(a+bx).
\end{align*}
\end{lemma}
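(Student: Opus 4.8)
The plan is to reduce both equations to finite identities via the $\mu$-continuity axiom~\eqref{ax:mucont} and then verify those identities by induction, in the same style as the proof of Lemma~\ref{lem:samelists}. I would prove the first equation in detail; the second follows by a symmetric argument, with right multiplication in place of left multiplication.

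First I would compute the finite approximants. A straightforward induction on $n$ shows that for $n \geq 1$ we have $nx.(1+xb) = \sum_{i=0}^{n-1} b^i$ and $nx.(a+xb) = a\sum_{i=0}^{n-1} b^i$, while $0x.(1+xb) = 0x.(a+xb) = 0$; here, as usual, $x$ is not free in $a$ or $b$, so each substitution step reduces to a multiplication, e.g.\ $(n{+}1)x.(a+xb) = a + (nx.(a+xb))\,b$. Comparing the two sequences gives the termwise identity $a\,(nx.(1+xb)) = nx.(a+xb)$ for every $n \geq 0$ (the base case being $a\cdot 0 = 0$).

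Next I would apply $\mu$-continuity. Instantiating~\eqref{ax:mucont} with left multiplier $a$ and right multiplier $1$ yields $a\,(\mu x.(1+xb)) = \sum_{n\geq 0} a\,(nx.(1+xb))$, and instantiating it with both multipliers equal to $1$ yields $\mu x.(a+xb) = \sum_{n\geq 0} nx.(a+xb)$. Combining these two equations with the termwise identity from the previous step gives $a\,(\mu x.(1+xb)) = \mu x.(a+xb)$, as required.

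I do not anticipate a genuine obstacle: the computation is elementary, and the only places that need care are the base case $n=0$ and the free-variable side conditions that make the one-step unfoldings $(n{+}1)x.(1+xb) = 1 + (nx.(1+xb))\,b$ and $(n{+}1)x.(a+xb) = a + (nx.(a+xb))\,b$ hold as written. For the second equation, $(\mu x.(1+bx))\,a = \mu x.(a+bx)$, the approximants become $nx.(1+bx) = \sum_{i=0}^{n-1} b^i$ and $nx.(a+bx) = (\sum_{i=0}^{n-1} b^i)\,a$, and one applies~\eqref{ax:mucont} with left multiplier $1$ and right multiplier $a$.
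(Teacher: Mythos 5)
Your proof is correct and follows essentially the same route as the paper: reduce the identity via the $\mu$-continuity axiom to the finite approximants and establish $a\,(nx.(1+xb)) = nx.(a+xb)$ for all $n$ by induction, with the second equation handled symmetrically. The only cosmetic difference is that you compute explicit closed forms $\sum_{i=0}^{n-1} b^i$ for the approximants (in the style of Lemma~\ref{lem:samelists}), whereas the paper proves the termwise identity directly by a one-step unfolding and the induction hypothesis.
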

\begin{proof}
We show the first equation only; the second follows from a
symmetric argument.  By $\mu$-continuity, we need only show that the
equation holds for any $n$.  The basis $n=0$ is trivial.  For the inductive case,
\begin{align*}
a((n{+}1)x.(1+xb)) &= a + a(nx.(1+xb))b \\
&= a + (nx.(a+xb))b \\
&= (n{+}1)x.(a+xb),
\end{align*}
where the induction hypothesis has been used in the second step.
\end{proof}
These properties also show that $\mu$-continuous Chomsky algebras are algebraically complete semirings in the sense of \cite{esle2002,EsikLeiss05}.
\begin{lemma}
The \emph{Greibach inequalities}
\begin{align*}
\mu x.s(\mu y.(1+ry)) &\leq \mu x.(s+xr) &
\mu x.(\mu y.(1+yr))s &\leq \mu x.(s+rx)
\end{align*}
of \KAG~{\upshape\cite{leiss92b}} hold in all $\mu$-continuous Chomsky algebras.
\end{lemma}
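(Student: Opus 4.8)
The plan is to establish the first inequality; the second then follows by the mirror-image argument, interchanging the two halves of Lemmas~\ref{lem:samelists} and~\ref{lem:listdistr}. Write $a$ for the right-hand side $\mu x.(s+xr)$. Since $\mu x.t$ names the least solution of $t\le x$ (the Park axioms \eqref{ax:Park}), it suffices to show that $a$ is a solution of $s\cdot\mu y.(1+ry)\le x$, i.e.\ that
\[ \bigl(s\cdot\mu y.(1+ry)\bigr)\subst xa \;\le\; a. \]
First I would $\alpha$-rename so that $y\notin\FV s\cup\FV r\cup\FV a$ and $y\neq x$; then the left-hand side above is $s\subst xa\cdot\mu y.(1+r\subst xa\,y)$. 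Abbreviate $s' = s\subst xa$ and $r' = r\subst xa$, noting that $y$ is free in neither.

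Next I would rewrite $s'\cdot\mu y.(1+r'y)$ into the shape $\mu y.(s'+y r')$. This is the only place $\mu$-continuity is needed, and it enters only through the two preceding lemmas: Lemma~\ref{lem:samelists} gives $\mu y.(1+r'y)=\mu y.(1+y r')$, and then Lemma~\ref{lem:listdistr} gives $s'\cdot\mu y.(1+y r')=\mu y.(s'+y r')$. So the goal is reduced to $\mu y.(s'+y r')\le a$.

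Finally I would invoke the Park induction rule once more: $\mu y.(s'+y r')\le a$ holds as soon as $s'+a\cdot r'\le a$. But $s'+a\cdot r' = (s+xr)\subst xa$, and since $a=\mu x.(s+xr)$, identity \eqref{ax:Park1} gives $(s+xr)\subst xa = a$, so the required inequality in fact holds with equality, completing the argument. The main obstacle is not depth but bookkeeping: one must peel off the outer $\mu x$ by Park and descend to the substituted body before Lemma~\ref{lem:listdistr} becomes applicable (that lemma needs the left multiplicand $s'$ to be free of the bound variable $y$, which is why the $\alpha$-renaming and the passage to $s\subst xa$ come first), and the non-commutativity of multiplication forces the preliminary use of Lemma~\ref{lem:samelists} to turn $1+r'y$ into $1+y r'$.
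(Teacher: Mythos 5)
Your proof is correct and takes essentially the same route as the paper's: peel off the outer $\mu$ with the Park rule, rewrite the substituted body via Lemmas \ref{lem:samelists} and \ref{lem:listdistr} into $\mu y.(s'+yr')$, and conclude. The only (minor, and in fact slightly more careful) difference is the finish: the paper asserts the equality $\mu y.(s\subst xu + yr\subst xu)=\mu x.(s+xr)$ outright, whereas you derive just the needed inequality $\mu y.(s'+yr')\leq a$ by a second application of Park together with \eqref{ax:Park1}.
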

\begin{proof}
For the left-hand inequality, let $u=\mu x.(s+xr)$. By the Park axioms, it suffices to show that $s(\mu y.(1+ry))\subst xu \leq u$. But
\begin{align*}
s(\mu y.(1+ry))\subst xu &= s\subst xu(\mu y.(1+r\subst xu y)) \\
&= s\subst xu (\mu y.(1+yr\subst xu)) \\
&= \mu y.(s\subst xu + yr\subst xu) \\
&= \mu x.(s+xr),
\end{align*}
where Lemmas \ref{lem:samelists} and \ref{lem:listdistr} have been used.

The right-hand ineuuality can be proved by a symmetric argument.
%
\end{proof}

Various other axioms of \cite{esle2002,EsikLeiss05,leiss92b} follow from the Park axioms.

The $\mu$-continuity condition \eqref{ax:mucont} implies the Park axioms \eqref{ax:Park}, but we must defer the proof of this fact until \S\ref{sec:main}. For now we just observe a related property of the canonical interpretation $\canon X$.

\begin{lemma}
\label{lem:freecont}
For any $s,t\in\T X$ and $y\in X$,
\fhcomment{Change: $p$ to $s$; $p$ is reminiscent of ``polynomial''.}%
\dkcomment{OK}%
\begin{align*}
\canon X(s\subst y{\mu y.t}) = \bigcup_{n\geq 0}\canon X(s\subst y{ny.t}).
\end{align*}
\end{lemma}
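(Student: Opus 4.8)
The plan is to reduce the statement to a continuity property of the canonical interpretation and its rebindings, and then prove that property by structural induction on $s$. First I would invoke the substitution property of Chomsky algebras (Lemma~\ref{lem:chomskyismusemi}, a specialization of \cite[Lemma 5.1.5]{Barendregt84}): $\canon X(s\subst yu)=\canon X\rebind y{\canon X(u)}(s)$ for any $u\in\T X$. Taking $u=\mu y.t$ and $u=ny.t$ and using the definition of $\canon X$, and setting $L_n=\canon X(ny.t)$, this turns the two sides of the lemma into $\canon X\rebind y{\bigcup_nL_n}(s)$ and $\bigcup_n\canon X\rebind y{L_n}(s)$ respectively, where the $L_n$ form an increasing chain (since $L_0=\emptyset$, $L_{n+1}=\canon X\rebind y{L_n}(t)$, and evaluation over $\CF X$ is monotone in the valuation) with $\bigcup_nL_n=\canon X(\mu y.t)$. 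So it suffices to prove, for every interpretation $\sigma$ over $\CF X$, every $y\in X$, and every increasing chain $(L_n)_n$ of context-free languages,
\[
\sigma\rebind y{\textstyle\bigcup_n L_n}(s)\ =\ \bigcup_n\sigma\rebind y{L_n}(s).
\]
Since each $\sigma\rebind yL$ is again an interpretation over $\CF X$, this statement is closed under the rebindings that appear in the $\mu$ case below, and I would prove it by induction on the structure of $s$.

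The cases $s\in\{0,1\}$ and $s=x$ an indeterminate are immediate (for $s=x\neq y$ both sides are $\{x\}$, using $\bigcup_n\{x\}=\{x\}$; for $s=y$ both are $\bigcup_nL_n$). For $s=s_1+s_2$ the claim follows from the induction hypothesis together with the fact that $\cup$ commutes with increasing unions. For $s=s_1\cdot s_2$, the induction hypothesis writes the left side as $(\bigcup_mA_m)(\bigcup_kB_k)$ with $A_m=\sigma\rebind y{L_m}(s_1)$ and $B_k=\sigma\rebind y{L_k}(s_2)$; since language concatenation distributes over arbitrary unions this is $\bigcup_{m,k}A_mB_k$, and since both chains are increasing and concatenation is monotone it collapses to $\bigcup_nA_nB_n$, the right side.

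The case $s=\mu z.s_1$ is where the real work lies, and I expect it to be the main obstacle. By $\alpha$-conversion one may assume $z\neq y$ (if $z=y$ then $y$ is not free in $s$ and both sides equal $\canon X(\mu y.s_1)$). With $\sigma_n=\sigma\rebind y{L_n}$ and $\sigma_\infty=\sigma\rebind y{\bigcup_nL_n}$, the identity $\sigma_\infty(\mu z.s_1)=\bigcup_m\sigma_\infty(mz.s_1)$ reduces the goal to $\bigcup_m\sigma_\infty(mz.s_1)=\bigcup_m\bigcup_n\sigma_n(mz.s_1)$, so it is enough to show $\sigma_\infty(mz.s_1)=\bigcup_n\sigma_n(mz.s_1)$ for each fixed $m$, which I would prove by a nested induction on $m$. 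The case $m=0$ is trivial; for $m{+}1$, the substitution property and the inner hypothesis give
\[
\sigma_\infty((m{+}1)z.s_1)=\sigma_\infty\rebind z{\sigma_\infty(mz.s_1)}(s_1)=\sigma_\infty\rebind z{\textstyle\bigcup_nM_n}(s_1),
\]
with $M_n=\sigma_n(mz.s_1)$ increasing in $n$. Applying the structural induction hypothesis to $s_1$ twice --- first for the variable $z$ with chain $(M_n)_n$ over the interpretation $\sigma_\infty$, then, inside, for the variable $y$ with chain $(L_n)_n$ (legitimate because $y\neq z$) --- rewrites this as $\bigcup_{n,k}\sigma\rebind y{L_n}\rebind z{M_k}(s_1)$, which by the increasing-chain argument again collapses to $\bigcup_n\sigma\rebind y{L_n}\rebind z{M_n}(s_1)=\bigcup_n\sigma_n((m{+}1)z.s_1)$ via the substitution property. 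This closes the nested induction, the case, and the outer induction.

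In summary, the only delicate point is that substituting $\mu y.t$ into a term $s$ containing an inner fixpoint $\mu z.s_1$ forces an interchange of the colimit over the approximants $ny.t$ with the colimit over the approximants $mz.s_1$; the nested induction on $m$ localizes this interchange to the body $s_1$ at each stage, where it is valid because evaluation over $\CF X$ is monotone in the valuation and both $\cup$ and concatenation preserve increasing unions. The remaining cases are routine.
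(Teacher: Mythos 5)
Your route is genuinely different from the paper's, and as written it has a gap. You reduce the lemma, via the substitution property, to a semantic continuity claim: for every interpretation $\sigma$ over $\CF X$ and every increasing chain $(L_n)_n$ of context-free languages, $\sigma\rebind y{\bigcup_nL_n}(s)=\bigcup_n\sigma\rebind y{L_n}(s)$. But $\bigcup_nL_n$ need not be context-free --- the paper stresses exactly this point when it notes that $\CF X$ is not a continuous semiring --- so $\sigma\rebind y{\bigcup_nL_n}$ need not be a valuation into $\CF X$ at all, and the claim is not even well-formed as stated. At the top level you happen to be safe, because $\bigcup_n\canon X(ny.t)=\canon X(\mu y.t)\in\CF X$; but inside your $\mu$ case you apply the structural hypothesis to the chain $M_n=\sigma\rebind y{L_n}(mz.s_1)$, and nothing you say guarantees $\bigcup_nM_n\in\CF X$. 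Similarly, the ``identity'' $\sigma_\infty(\mu z.s_1)=\bigcup_m\sigma_\infty(mz.s_1)$ is definitional only for the canonical interpretation; for an arbitrary interpretation over $\CF X$, $\mu$-continuity yields the supremum in $\CF X$, which coincides with the union only when that union is context-free --- again the very point at issue. To repair the argument you would either have to run the continuity claim in the powerset algebra $2^{X^\star}$ (where arbitrary unions exist) and then prove a separate agreement lemma that the interpretations over $\CF X$ and over $2^{X^\star}$ induced by the same CFL-valued valuation coincide (i.e., that the inclusion preserves least solutions), or restrict attention to the specific approximation chains that arise and show along the way that their unions are context-free. Either fix is genuine extra work, not a routine remark.

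The paper's proof avoids all of this by staying syntactic: it inducts on the well-founded subexpression order (with $\mu x.s\succ mx.s$) and only ever evaluates $\canon X$ on actual terms, so no variable is ever rebound to a semantic value. Its $\mu$ case is simply $\canon X((\mu x.s)\subst y{\mu y.t})=\bigcup_m\bigcup_n\canon X((mx.s)\subst y{ny.t})$, followed by an unconditional interchange of the two unions and a reassembly via the definition of $\canon X$ --- no chain conditions, no monotonicity, and no question of whether intermediate unions are context-free.
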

\begin{proof}
We proceed by induction on the structure of $s$.
The cases for $+$ and $\cdot$ are quite easy, using the facts that for chains of sets of strings $A_0\subs A_1\subs A_2\subs\cdots$ and $B_0\subs B_1\subs B_2\subs\cdots$,
\begin{align*}
\bigcup_m A_m\cup\bigcup_n B_n &= \bigcup_n A_n\cup B_n &
\bigcup_m A_m\cdot\bigcup_n B_n &= \bigcup_n A_nB_n.
\end{align*}
The base cases are also straightforward. For $\mu x.s$, assume without loss of generality that $y\neq x$ and $x$ is not free in $t$.
\begin{align*}
\canon X((\mu x.s)\subst y{\mu y.t}) &= \bigcup_m\canon X((mx.s)\subst y{\mu y.t})\\
&= \bigcup_m\bigcup_n\canon X((mx.s)\subst y{ny.t})\\
&= \bigcup_n\bigcup_m\canon X((mx.s)\subst y{ny.t})\\
&= \bigcup_n\canon X((\mu x.s)\subst y{ny.t}).
\end{align*}

\end{proof}

%
%
%

\section{Main Results}
\label{sec:main}

Our main result depends on an analog of a result of \cite{kozen81} (see \cite{K91a}). It asserts that the supremum of a context-free language over a $\mu$-continuous Chomsky algebra $K$ exists, interpreting strings over $K$ as products in $K$. Moreover, multiplication is continuous with respect to suprema of context-free languages.

\begin{lemma}
\label{lem:cflsupr}
Let $\sigma:\T X\fun K$ be any interpretation over a $\mu$-continuous Chomsky algebra $K$.
Let $\tau:\T X\fun\CF X$ be any interpretation over the context-free languages $\CF X$ such that for all $x\in X$ and $s,u\in\T X$,
\begin{align*}
\sigma(sxu) &= \sum_{y\,\in\,\tau(x)} \sigma(syu).
\end{align*}
Then for any $s,t,u\in\T X$,
\begin{align*}
\sigma(stu) &= \sum_{y\,\in\,\tau(t)} \sigma(syu).
\end{align*}
In particular,
\begin{align}
\sigma(stu) &= \sum_{y\in\canon X(t)} \sigma(syu),\label{eq:cflsuprL}
\end{align}
where $\canon X$ is the canonical interpretation defined in \S\ref{sec:continuity}.
\end{lemma}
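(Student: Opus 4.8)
The plan is to prove the general statement $\sigma(stu)=\sum_{y\in\tau(t)}\sigma(syu)$ by structural induction on $t$, and then obtain the ``in particular'' clause by instantiating $\tau=\canon X$, which one checks satisfies the hypothesis $\sigma(sxu)=\sum_{y\in\canon X(x)}\sigma(syu)$ trivially since $\canon X(x)=\{x\}$. So the real work is the induction on $t$, where the hypothesis on variables is exactly the base case $t=x$.

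For the base cases $t=0$ and $t=1$: when $t=0$, $\canon{}$-style reasoning gives $\tau(0)=\emptyset$, so the right-hand sum is the empty supremum, namely $0$, and $\sigma(s\cdot 0\cdot u)=\sigma(0)=0$ since $\sigma$ is a semiring homomorphism; when $t=1$, $\tau(1)=\{\eps\}$ and $\sigma(s\eps u)$ should be read as $\sigma(su)$, matching $\sigma(s\cdot 1\cdot u)$. For $t=t_1+t_2$: $\tau(t_1+t_2)=\tau(t_1)\cup\tau(t_2)$, and $\sigma(s(t_1+t_2)u)=\sigma(st_1u)+\sigma(st_2u)$ by distributivity; by the induction hypothesis each summand equals $\sum_{y\in\tau(t_i)}\sigma(syu)$, and the supremum over a union is the join of the suprema (using that $a+b$ is the supremum of $a,b$, and that these suprema exist — which is part of what we are simultaneously establishing). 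For $t=t_1t_2$: write $\sigma(st_1t_2u)$ and apply the induction hypothesis first to $t_1$ (with context $s$ on the left and $t_2u$ on the right), getting $\sum_{y\in\tau(t_1)}\sigma(syt_2u)$; then to each term apply the hypothesis to $t_2$ (context $sy$ on the left, $u$ on the right), getting $\sum_{y\in\tau(t_1)}\sum_{z\in\tau(t_2)}\sigma(syzu)$, which is the sum over $\tau(t_1)\tau(t_2)=\tau(t_1t_2)$; here one needs continuity of multiplication with respect to these suprema, i.e. that the nested sups collapse — again something proved hand-in-hand with existence.

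For the main case $t=\mu x.t'$, which I expect to be the chief obstacle: by the $\mu$-continuity axiom applied in the form $\sigma(s(\mu x.t')u)=\sum_{n\geq 0}\sigma(s(nx.t')u)$, it suffices to show $\sum_{n}\sigma(s(nx.t')u)=\sum_{y\in\tau(\mu x.t')}\sigma(syu)$. Since $\tau$ is an interpretation over $\CF X$ satisfying the defining clause for $\mu$, we have (via Lemma~\ref{lem:freecont}, or directly) $\tau(\mu x.t')=\bigcup_n\tau(nx.t')$, so the right-hand side is $\sum_n\sum_{y\in\tau(nx.t')}\sigma(syu)$; by the induction hypothesis applied to each finite approximant $nx.t'$ (each of which is a ``smaller'' $\mu$-expression, so the induction must be set up on a measure like number of $\mu$'s, or one proves the approximant statement by an inner induction on $n$ using the $+,\cdot$ cases), the inner sum is $\sigma(s(nx.t')u)$, and we are done. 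The delicate points throughout are (i) ensuring every supremum invoked actually exists in $K$ — the standard fix is to strengthen the induction hypothesis to assert existence together with the equation, proving both at once, exactly as in the star-continuous Kleene algebra analog of \cite{kozen81}; and (ii) getting the induction well-founded despite $nx.t'$ referring back to $t'$ — handled by induction on $\mu$-nesting depth with an inner induction on $n$. The $+$ and $\cdot$ cases then supply precisely the continuity-of-join and continuity-of-multiplication facts needed to make the $\mu$ case go through.
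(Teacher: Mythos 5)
Your proposal follows the paper's proof almost exactly: structural induction on $t$ with the same case analysis (the hypothesis on variables as the base case, the trivial $0$ and $1$ cases, distributivity plus join of suprema for $+$, nested suprema collapsing for $\cdot$, and $\mu$-continuity together with $\tau(\mu x.t')=\bigcup_n\tau(nx.t')$ for the $\mu$ case), the same caveat that existence of the suprema is part of what is being proved (the paper makes this a separate Remark), and the same instantiation $\tau=\canon X$ at the end. The one place where you diverge is the justification of well-foundedness in the $\mu$ case, and there your proposed fix does not work as stated. Neither ``number of $\mu$'s'' nor ``$\mu$-nesting depth'' decreases when passing from $\mu x.t'$ to the approximants $nx.t'$: since $(n{+}1)x.t'=t'\subst x{nx.t'}$, inner $\mu$'s of $t'$ get duplicated and nested. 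For example, with $t'=\mu y.(xy)$ the term $\mu x.t'$ has nesting depth $2$, while $3x.t'=\mu y.\bigl((\mu y.((\mu y.(0y))y))y\bigr)$ has depth $3$, and the depth of $nx.t'$ grows without bound in $n$. So an induction on either of those measures cannot invoke the inductive hypothesis at $nx.t'$, which is exactly what the $\mu$ case requires.

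The paper sidesteps this by inducting on the relation generated by $t+u\succ t$, $t+u\succ u$, $tu\succ t$, $tu\succ u$, and $\mu x.t\succ nx.t$, whose well-foundedness is a nontrivial fact cited from \cite{kozen83}; with that relation the inductive hypothesis applies to $nx.p$ directly, with the same $\sigma$, $\tau$, $s$, $u$, and no further work is needed. Your alternative suggestion --- an outer induction on $t'$ with an inner induction on $n$ --- can be made to work, but not merely ``using the $+,\cdot$ cases'': $(n{+}1)x.t'$ is a substitution instance of $t'$, not a sum or product of smaller approximants, so you would need a substitution/rebinding step (pass to interpretations $\sigma\rebind x{\sigma(nx.t')}$ and $\tau\rebind x{\tau(nx.t')}$ and re-verify the lemma's hypothesis on variables for them), which is where the real work of that route lies and which your sketch omits. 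Either cite the well-foundedness of $\succ$ as the paper does, or carry out that rebinding argument explicitly; as written, the induction is not justified.
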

\begin{remark}
Note carefully that the lemma does not assume \emph{a priori} knowledge of the existence of the suprema. The equations should be interpreted as asserting that the supremum on the right-hand side exists and is equal to the expression on the left-hand side.
\end{remark}
\begin{proof}
The proof is by induction on the structure of $t$, that is by
induction on the subexpression relation $t + u \succ t, t + u \succ u,
t \cdot u \succ t, t \cdot u \succ u, \mu x. t \succ n x. t$, which is 
well-founded \cite{kozen83}.

All cases are similar to the proof in \cite[Lemma 7.1]{K91a} for star-continuous Kleene algebra, with the exception of the case $t=\mu x.p$.

For variables $t=x\in X$, the desired property holds by assumption. For the constants $t=0$ and $t=1$,
\begin{align*}
\sigma(s0u) &= 0 = \sum\,\emptyset = \sum_{y\,\in\,\emptyset} \sigma(syu) = \sum_{y\,\in\,\tau(0)} \sigma(syu)\\
\sigma(s1u) &= \sigma(su) = \sum_{y\,\in\,\cset\eps} \sigma(syu) = \sum_{y\,\in\,\tau(1)} \sigma(syu).
\end{align*}

For sums $t=p+q$,
\begin{align}
\sigma(s(p+q)u) &= \sigma(spu) + \sigma(squ)\nonumber\\
&= \sum_{x\,\in\,\tau(p)} \sigma(sxu) + \sum_{y\,\in\,\tau(q)} \sigma(syu)\label{lem:p+q1}\\
&= \sum_{z\,\in\,\tau(p)\cup\tau(q)} \sigma(szu)\label{lem:p+q2}\\
&= \sum_{z\,\in\,\tau(p+q)} \sigma(szu)\label{lem:p+q3}.
\end{align}
Equation \eqref{lem:p+q1} is by two applications of the induction hypothesis.
Equation \eqref{lem:p+q2} is by the properties of supremum. 
Equation \eqref{lem:p+q3} is by the definition of sum in $\CF X$. 

For products $t=pq$,
\begin{align}
\sigma(spqu) &= \sum_{x\,\in\,\tau(p)}\,\sum_{y\,\in\,\tau(q)} \sigma(sxyu)\label{lem:pq1}\\
&= \sum_{z\,\in\,\tau(p)\cdot\tau(q)} \sigma(szu)\label{lem:pq2}\\
&= \sum_{z\,\in\,\tau(pq)} \sigma(szu)\label{lem:pq3}.
\end{align}
Equation \eqref{lem:pq1} is by two applications of the induction hypothesis.
Equations \eqref{lem:pq2} and \eqref{lem:pq3} are by the definition of product in $\CF X$. 

Finally, for $t=\mu x.p$,
\begin{align}
\sigma(s(\mu x.p)u) &= \sum_n\,\sigma(s(nx.p)u)\label{lem:cflsupr1}\\
&= \sum_n \sum_{y\,\in\,\tau(nx.p)} \sigma(syu)\label{lem:cflsupr2}\\
&= \sum_{y\,\in\,\bigcup_n\!\tau(nx.p)} \sigma(syu)\label{lem:cflsupr3}\\
&= \sum_{y\,\in\,\tau(\mu x.p)} \sigma(syu).\label{lem:cflsupr4}
\end{align}
Equation \eqref{lem:cflsupr1} is just the $\mu$-continuity property \eqref{ax:mucont}.
Equation \eqref{lem:cflsupr2} is by the induction hypothesis, observing that $\mu x. p \succ nx.p$. 
Equation \eqref{lem:cflsupr3} is a basic property of suprema. 
Finally, equation \eqref{lem:cflsupr4} is by the definition of $\tau(\mu x.p)$ in $\CF X$.

The result \eqref{eq:cflsuprL} for the special case of $\tau=\canon X$ is immediate, observing that $\canon X$ satisfies the assumption of the lemma: for $x\in X$,
\begin{align*}
\sigma(sxu) &= \sum_{y\in\cset x}\,\sigma(syu) = \sum_{y\in\canon X(x)}\,\sigma(syu).
\end{align*}
\end{proof}

At this point we can show that the $\mu$-continuity condition implies the Park axioms.
\begin{theorem}
The $\mu$-continuity condition \eqref{ax:mucont} implies the Park axioms \eqref{ax:Park}.
\end{theorem}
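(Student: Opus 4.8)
The plan is to derive both Park axioms \eqref{ax:Park} — and with them \eqref{ax:Park1} — from the $\mu$-continuity axiom, fixing an arbitrary interpretation $\sigma$ over a $\mu$-continuous Chomsky algebra. The engine is the degenerate instance $s=u=1$ of Lemma~\ref{lem:cflsupr}: for every $e\in\T X$ the supremum $\sum_{y\in\canon X(e)}\sigma(y)$ exists and equals $\sigma(e)$, where a string $y=x_1\cdots x_k\in\canon X(e)$ is read as the product $\sigma(x_1)\cdots\sigma(x_k)$. From this I first record a monotonicity fact: if $\rho\leq\rho'$ pointwise, then $\rho(e)\leq\rho'(e)$ for all $e\in\T X$. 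Indeed $\canon X(e)$ does not depend on the interpretation, so $\rho(e)$ and $\rho'(e)$ are suprema over the same language; for each string $y$ in it $\rho(y)\leq\rho'(y)$ by monotonicity of multiplication (a finite induction on $|y|$), so $\rho'(e)$ is an upper bound of $\set{\rho(y)}{y\in\canon X(e)}$ and hence dominates $\rho(e)$. As the remark following Lemma~\ref{lem:cflsupr} warns, I will write down only suprema indexed by context-free languages, so that their existence is assured.

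For the first Park axiom I would compute, using the substitution property, the degenerate form of Lemma~\ref{lem:cflsupr}, Lemma~\ref{lem:freecont}, and the identity $t\subst x{nx.t}=(n{+}1)x.t$:
\begin{align*}
\sigma(t\subst x{\mu x.t})
&= \sum_{y\,\in\,\canon X(t\subst x{\mu x.t})}\sigma(y)\\
&= \sum_{y\,\in\,\bigcup_n\canon X(t\subst x{nx.t})}\sigma(y)\\
&= \sum_{y\,\in\,\bigcup_n\canon X((n{+}1)x.t)}\sigma(y)\\
&\leq \sum_{y\,\in\,\canon X(\mu x.t)}\sigma(y)
= \sigma(\mu x.t),
\end{align*}
the inequality because $\bigcup_n\canon X((n{+}1)x.t)\subs\bigcup_n\canon X(nx.t)=\canon X(\mu x.t)$ and the right-hand supremum, which exists, bounds the smaller set. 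In fact equality holds throughout (the missing $n=0$ term $\canon X(0x.t)=\emptyset$ is irrelevant), so \eqref{ax:Park1} drops out too; but $\leq$ is all that is needed here.

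For the second Park axiom, suppose $\sigma(t)\leq\sigma(x)=:a$; since $\sigma$ and $\sigma\rebind xa$ agree, this reads $\sigma\rebind xa(t)\leq a$. I claim $\sigma(nx.t)\leq a$ for all $n$, by induction on $n$. The base case is $\sigma(0x.t)=0\leq a$; in the step, $\sigma((n{+}1)x.t)=\sigma(t\subst x{nx.t})=\sigma\rebind x{\sigma(nx.t)}(t)\leq\sigma\rebind xa(t)\leq a$, using the substitution property, then the monotonicity fact with $\sigma(nx.t)\leq a$, then the hypothesis. Since $x\notin\FV{nx.t}$ and $x\notin\FV{\mu x.t}$, the $\mu$-continuity axiom \eqref{ax:mucont} now gives $\sigma(\mu x.t)=\sum_n\sigma(nx.t)\leq a=\sigma(x)$, a supremum of elements below $a$ being below $a$.

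The main obstacle is not any individual calculation but the supremum bookkeeping flagged in the remark: the rearrangement $\sum_{y\in\bigcup_nA_n}\sigma(y)=\sum_n\sum_{y\in A_n}\sigma(y)$ and the monotonicity of $\sum$ must be invoked only for the families $A_n=\canon X((n{+}1)x.t)$ and their union $\canon X(\mu x.t)$, which are context-free, so that Lemma~\ref{lem:cflsupr} actually supplies the suprema. It is also worth stressing why monotonicity alone cannot prove the first axiom: one needs that applying $t$ at the head of the finite approximants commutes with passing to their supremum, and it is precisely this continuity — unavailable from monotonicity by itself — that Lemmas~\ref{lem:cflsupr} and \ref{lem:freecont} have already packaged.
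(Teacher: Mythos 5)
Your proposal is correct and follows essentially the same route as the paper: the induction rule $t\leq x \Imp \mu x.t\leq x$ by induction on $n$ plus $\mu$-continuity, and the fixpoint inequality $t\subst x{\mu x.t}\leq\mu x.t$ by combining Lemma~\ref{lem:freecont} with the degenerate case of Lemma~\ref{lem:cflsupr}. The only differences are cosmetic: you justify the monotonicity step explicitly via Lemma~\ref{lem:cflsupr} where the paper simply cites monotonicity, and you compare the index sets $\bigcup_n\canon X((n{+}1)x.t)$ and $\canon X(\mu x.t)$ directly where the paper interchanges the iterated suprema.
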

\begin{proof}
We first show $p\leq x\Imp\mu x.p\leq x$ in any idempotent semiring satisfying the $\mu$-continuity condition. Let $\sigma$ be a valuation such that $\sigma(\mu x.p) = \sum_n \sigma(nx.p)$. Suppose that $\sigma(p)\leq \sigma(x)$. We show by induction that for all $n\geq 0$, $\sigma(nx.p)\leq \sigma(x)$. This is certainly true for $0x.p = 0$. Now suppose it is true for $nx.p$. Using monotonicity,
\begin{align*}
\sigma((n{+}1)x.p) = \sigma(p\subst x{nx.p}) \leq \sigma(p\subst xx) = \sigma(p) \leq \sigma(x).
\end{align*}
By $\mu$-continuity, $\sigma(\mu x.p) = \sum_n \sigma(nx.p) \leq \sigma(x)$.

Now we show that $p\subst x{\mu x.p} \leq \mu x.p$. This requires the stronger property that a $\mu$-expression is chain-continuous with respect to suprema of context-free languages as a function of its free variables. Using Lemmas \ref{lem:freecont} and \ref{lem:cflsupr},
\begin{align*}
\sigma(p\subst x{\mu x.p}) &= \sum\,\set{\sigma(y)}{y\in\canon X(p\subst x{\mu x.p})}\\
&= \sum\,\set{\sigma(y)}{y\in\bigcup_n\canon X(p\subst x{nx.p})}\\
&= \sum_n\,\sum\,\set{\sigma(y)}{y\in\canon X(p\subst x{nx.p})}\\
&= \sum_n\,\sigma(p\subst x{nx.p})\\
&= \sum_n\,\sigma((n{+}1)x.p)\\
&= \sigma(\mu x.p).
\end{align*}
\end{proof}

The following is our main theorem.
\begin{theorem}
Let $X$ be an arbitrary set and let $s,t\in\T X$. 
\fhcomment{Question: Which $X$? Arbitrary set?}%
\dkcomment{Yes.}%
The following are equivalent:
\begin{enumerate}
\renewcommand\labelenumi{{\upshape(\roman{enumi})}}
\item
The equation $s=t$ holds in all $\mu$-continuous Chomsky algebras; that is, $s=t$ is a logical consequence of the axioms of idempotent semirings and the $\mu$-continuity condition
\begin{align}
a(\mu x.t)b &= \sum_{n\geq 0}\,a(nx.t)b,\label{ax:mucontA}
\end{align}
or equivalently, the universal formulas
\begin{gather}
a(nx.t)b \leq a(\mu x.t)b,\quad n\geq 0\label{ax:mucont1A}\\
\left(\bigwedge_{n\geq 0} (a(nx.t)b \leq w)\right)\ \Imp\ a(\mu x.t)b \leq w.\label{ax:mucont2A}
\end{gather}
\item
The equation $s=t$ holds in the semiring of context-free languages $\CF Y$ over any set $Y$.
\item
$\canon X(s)=\canon X(t)$, where $\canon X:\T X\fun\CF X$ is the standard interpretation mapping a $\mu$-expression to a context-free language of strings over its free variables.
\end{enumerate}
Thus the axioms of idempotent semirings and $\mu$-continuity are sound and complete for the equational theory of the context-free languages.
\end{theorem}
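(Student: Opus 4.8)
The plan is to establish the cycle of implications (i) $\Rightarrow$ (ii) $\Rightarrow$ (iii) $\Rightarrow$ (i), with the equivalence of the $\mu$-continuity condition \eqref{ax:mucontA} and the universal formulas \eqref{ax:mucont1A}--\eqref{ax:mucont2A} being a routine unwinding of the definition of supremum (together with the fact that $\sum$ of $a$ and $b$ is $a+b$, so the quantified formulas are genuinely first-order over the signature). For (i) $\Rightarrow$ (ii), I would invoke the fact, asserted in \S\ref{sec:continuity} and used throughout, that each $\CF Y$ is a $\mu$-continuous Chomsky algebra: algebraic closure holds because a finite system of polynomial inequalities is a context-free grammar whose least solution is the tuple of generated languages, and the $\mu$-continuity axiom holds because $\canon Y(\mu x.t)=\bigcup_n\canon Y(nx.t)$ by definition and concatenation distributes over unions of chains. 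Hence any equation valid in all $\mu$-continuous Chomsky algebras holds in particular in every $\CF Y$. The implication (ii) $\Rightarrow$ (iii) is trivial: $\canon X$ is an interpretation into $\CF X$, so if $s=t$ holds in $\CF X$ it holds under $\canon X$; one only needs to note that $\canon X(s)$ and $\canon X(t)$ are languages over $\FV s\cup\FV t\subseteq X$, which is the content of the remark that every $\mu$-expression denotes a context-free language over its free variables.

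The substance of the theorem is the completeness direction (iii) $\Rightarrow$ (i). Suppose $\canon X(s)=\canon X(t)$; I must show $\sigma(s)=\sigma(t)$ for every interpretation $\sigma:\T X\fun K$ into an arbitrary $\mu$-continuous Chomsky algebra $K$. The key tool is Lemma \ref{lem:cflsupr} applied with $\tau=\canon X$: it yields, for any $s',u'\in\T X$ (in particular with $s'=u'=1$),
\begin{align*}
\sigma(s) = \sigma(1\cdot s\cdot 1) = \sum_{y\in\canon X(s)}\sigma(y), \qquad \sigma(t) = \sum_{y\in\canon X(t)}\sigma(y),
\end{align*}
where each $y$ ranges over strings in the respective context-free language, interpreted in $K$ as the product of its letters under $\sigma$ (the empty string giving $1$). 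Since by hypothesis $\canon X(s)$ and $\canon X(t)$ are the \emph{same} set of strings, the two suprema are indexed by the same set and are therefore equal as elements of $K$; moreover Lemma \ref{lem:cflsupr} guarantees both suprema exist, so the equality $\sigma(s)=\sigma(t)$ is meaningful and established. As $\sigma$ and $K$ were arbitrary, $s=t$ is a logical consequence of the idempotent semiring axioms together with $\mu$-continuity, which is exactly (i).

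I expect the main obstacle to be purely a matter of correctly citing and assembling the machinery rather than any new argument: the real work has already been done in Lemma \ref{lem:cflsupr}, whose proof in turn leans on the well-foundedness of the subexpression relation $\mu x.p\succ nx.p$ and on the $\mu$-continuity axiom to handle the fixpoint case, and on Lemma \ref{lem:freecont} (via the preceding theorem) to know that the approximants behave correctly under substitution. The one point requiring a little care is the bookkeeping around free variables: one should remark that $s,t\in\T X$ for an \emph{arbitrary} $X$ is harmless because only the finitely many variables in $\FV s\cup\FV t$ matter, so the canonical interpretation $\canon X$ and Lemma \ref{lem:cflsupr} apply uniformly, and the equivalence with validity in $\CF Y$ for \emph{every} $Y$ in clause (ii) follows since $\CF X$ embeds in $\CF Y$ whenever $X\subseteq Y$ and conversely any valuation into $\CF Y$ factors through strings over the relevant finite subalphabet.
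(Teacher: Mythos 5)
Your proposal is correct and follows essentially the same route as the paper: (i) $\Imp$ (ii) since each $\CF Y$ is a $\mu$-continuous Chomsky algebra, (iii) as a special case of (ii), and the completeness direction (iii) $\Imp$ (i) by two applications of Lemma \ref{lem:cflsupr} (with trivial context, i.e.\ $s'=u'=1$) to write $\sigma(s)=\sum_{y\in\canon X(s)}\sigma(y)=\sum_{y\in\canon X(t)}\sigma(y)=\sigma(t)$. The additional remarks on the equivalence of \eqref{ax:mucontA} with \eqref{ax:mucont1A}--\eqref{ax:mucont2A} and on the harmlessness of an arbitrary alphabet are fine but not part of the paper's argument.
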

\begin{proof}
The implication (i) $\Imp$ (ii) holds since $\CF Y$ is a $\mu$-continuous Chomsky algebra, and (iii) is a special case of (ii). Finally, if (iii) holds, then by two applications of Lemma \ref{lem:cflsupr}, for any interpretation $\sigma:\T X\fun K$ over a $\mu$-continuous Chomsky algebra $K$,
\begin{align*}
\sigma(s) &= \sum_{x\in\canon K(s)} \sigma(x) = \sum_{x\in\canon K(t)} \sigma(x) = \sigma(t),
\end{align*}
which proves (i).
\end{proof}

\begin{theorem}
The context-free languages over the alphabet $X$ form the free $\mu$-continuous Chomsky algebra on generators $X$.
\end{theorem}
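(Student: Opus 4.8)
The plan is to establish the universal property directly. Fix a $\mu$-continuous Chomsky algebra $K$ and a function $f:X\fun K$; write $\eta:X\fun\CF X$ for the insertion $\eta(x)=\{x\}$ of generators, which is $\canon X$ restricted to $X$. I must produce a \emph{unique} Chomsky-algebra homomorphism $\bar f:\CF X\fun K$ with $\bar f\circ\eta=f$.

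First I would construct $\bar f$ by factoring the canonical interpretation. Extend $f$ to the unique interpretation $\sigma:\T X\fun K$ with $\sigma(x)=f(x)$ (\S\ref{sec:mu-exp}). The Main Theorem, in the form (iii)$\Imp$(i), gives $\canon X(s)=\canon X(t)\ \Imp\ \sigma(s)=\sigma(t)$, so $\sigma$ factors through $\canon X$. Since $\canon X:\T X\fun\CF X$ is surjective — every context-free language is generated by a grammar, hence, by Beki\'c's theorem applied to the grammar read as a polynomial system, equals $\canon X(m)$ for some $\mu$-expression $m$ — this factorization determines a unique $\bar f:\CF X\fun K$ with $\bar f\circ\canon X=\sigma$. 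Then $\bar f(\{x\})=\sigma(x)=f(x)$, and since $\canon X$ is a surjective semiring homomorphism and $\sigma$ a semiring homomorphism, $\bar f$ is one as well.

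The substantial step is to check that $\bar f$ preserves least solutions of systems of polynomial inequalities. Let $p_i\leq x_i$, $1\leq i\leq n$, be such a system over $\CF X$ with least solution $\rho$, and let $q_i\leq x_i$ be obtained by replacing each coefficient $L\in\CF X$ by $\bar f(L)$. Monotonicity of $\bar f$ shows at once that $\bar f\circ\rho$ is \emph{a} solution over $K$; the work is to show it is the \emph{least} one. For this I would choose, for each coefficient $L$ appearing in the $p_i$, a $\mu$-expression $u_L$ with $\canon X(u_L)=L$, and replace each $L$ by $u_L$ to obtain a ``generic'' polynomial system $\hat p_i\leq x_i$ whose coefficients are the $\mu$-expressions $u_L$. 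By Beki\'c's theorem (\S\ref{sec:Bekic}) there are $\mu$-expressions $m_1,\dots,m_n$ naming its least solution generically, i.e.\ for \emph{every} interpretation $\theta$ over a Chomsky algebra, $(\theta(m_1),\dots,\theta(m_n))$ is the least solution of the instantiated system $\theta(\hat p_i)\leq x_i$. Instantiating $\theta=\canon X$ yields $\canon X(m_i)=\rho(x_i)$; instantiating $\theta=\sigma$ yields that $(\sigma(m_1),\dots,\sigma(m_n))$ is the least solution of $q_i\leq x_i$ over $K$. Since $\bar f(\rho(x_i))=\bar f(\canon X(m_i))=\sigma(m_i)$, these agree, so $\bar f\circ\rho$ is the least solution and $\bar f$ is a Chomsky-algebra homomorphism.

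Uniqueness is then easy: any Chomsky-algebra homomorphism $g:\CF X\fun K$ with $g\circ\eta=f$ agrees with $\bar f$ everywhere, because any context-free language $L$ is a component of the least solution over $\CF X$ of a grammar whose coefficients lie among $1$ and the singletons $\{x\}$, and $g$ and $\bar f$ carry that grammar to the \emph{same} polynomial system over $K$ (both send $\{x\}$ to $f(x)$), hence to the same least solution. I expect the preservation-of-least-solutions step to be the main obstacle: because $\CF X$ has no primitive $\mu$ operator, one cannot transport $\mu$ along $\bar f$ formally, and the argument must route through Beki\'c's theorem together with the observation that every coefficient of a polynomial system over $\CF X$ — being a context-free language — is itself denoted by a $\mu$-expression, which is what lets one read a single generic system off in both $\CF X$ and $K$.
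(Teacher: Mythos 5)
Your proof is correct, but it takes a genuinely different route from the paper's. The paper's own argument is much terser: it extends $f$ to the unique interpretation $\sigma:\T X\fun K$ and invokes Lemma \ref{lem:cflsupr} directly to decompose $\sigma = \sum\circ\,\CF\sigma\circ L_X$, where $\CF\sigma(A)=\set{\sigma(x)}{x\in A}$ and $\sum:\CF K\fun K$ takes suprema of context-free languages over $K$ (which exist by that same lemma); the required morphism is $\sum\circ\,\CF\sigma$, and the paper simply exhibits the adjunction data (unit $x\mapsto\cset x$, counit $\sum$), leaving the verification that this map preserves least solutions, and the uniqueness, implicit. You instead obtain the map by factoring $\sigma$ through the surjection $L_X:\T X\fun\CF X$ using the completeness theorem (iii)\,$\Imp$\,(i); this is legitimate and yields extensionally the same map, since $\bar f(L_X(t))=\sigma(t)=\sum_{y\in L_X(t)}\sigma(y)$ by Lemma \ref{lem:cflsupr}. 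You then carry out explicitly what the paper leaves to the reader: preservation of least solutions, via Beki\'c's theorem applied to a generic system whose coefficients are $\mu$-expressions naming the coefficient languages, instantiated once at $L_X$ and once at $\sigma$; and uniqueness, via the observation that every context-free language is a component of the least solution of a grammar whose coefficients lie among $0$, $1$, and the singletons $\cset x$, on which any competing morphism must agree with $\bar f$. What the paper's route buys is brevity and the explicit adjunction (the counit $\sum:\CF K\fun K$ is of independent interest); what your route buys is a complete, checkable verification of the Chomsky-algebra morphism property and of uniqueness, at the cost of routing through the completeness theorem and Beki\'c's theorem rather than through Lemma \ref{lem:cflsupr} alone.
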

\begin{proof}
Let $K$ be a $\mu$-continuous Chomsky algebra.
Any map $\sigma:X\fun K$ extends uniquely to an interpretation $\sigma:\T X\fun K$. By Lemma \ref{lem:cflsupr}, this decomposes as
\begin{align*}
\sigma &= {\textstyle\sum}\circ\CF\sigma\circ\canon X,
\end{align*}
where $\canon X:\T X\fun\CF X$ is the canonical interpretation in the context-free languages over $X$, $\CF\sigma:\CF X\fun\CF K$ is the map $\CF\sigma(A) = \set{\sigma(x)}{x\in A}$, and ${\sum}:\CF K\fun K$ takes the supremum of a context-free language over $K$, which is guaranteed to exist by Lemma \ref{lem:cflsupr}. The unique morphism $\CF X\fun K$ corresponding to $\sigma$ is ${\sum}\circ\CF\sigma$. Thus $\CFname$ is left adjoint to the forgetful functor from $\mu$-continuous Chomsky algebras to \Set. The maps $x\mapsto\cset x:X\fun\CF X$ and ${\sum}:\CF K\fun K$ are the unit and counit, respectively, of the adjunction.
\end{proof}

\section{Conclusion}

We have given a natural complete infinitary axiomatization of the equational theory of the context-free languages. Lei\ss\ \cite{leiss92b} states as an open problem:
\begin{quote}
Are there natural equations between $\mu$-regular expressions that are valid in all
continuous models of \KAF, but go beyond \KAG?
\end{quote}
We have identified such a system in this paper, thereby answering Lei\ss's question. He does not state axiomatization as an open problem, but observes that the set of pairs of equivalent context-free grammars is not recursively enumerable, then goes on to state:
\begin{quote}
Since there is an effective translation between context-free grammars and $\mu$--regular expressions\,\ldots, the equational theory of context-free languages in terms of $\mu$-regular expressions is not axiomatizable at all.
\end{quote}
Nevertheless, we have given an axiomatization.
How do we reconcile these two views?
Lei\ss\ is apparently using ``axiomatization'' in the sense of ``recursive axiomatization.'' But observe that the axiom
\fhcomment{Change: Remove ``such'' since Lei\ss\ obviously uses
``axiomtization'' in the sense of ``recursive axiomatization''.  Add a statement to this effect (i.e. Lei\ss\ implicitly meaning ``recursive''.}%
\dkcomment{OK, how about: Nevertheless, we have given an axiomatization.
How do we reconcile these two views?
Lei\ss\ is apparently using ``axiomatization'' in the sense of ``recursive axiomatization''. But observe that the axiom...}%
\fhcomment{I like it. It makes clear that we are not accusing Lei\ss\ of 
making a technically wrong statement, but of not making the distinction between recursive and unrestricted axiomatization.  No need to appear to want to insult him, given he is bound to referee our paper...}%
\dkcomment{Unfortunately, this kind of blunts the impact of the statement. If Leiss had meant ``recursive'', he should have said it.}%
\fhcomment{It's pretty clear he meant it, and, yes, he should have said ``recursive''.  He is not the first one to mentally conflate ``recursive'' into ``axiomatization''.   Committing the mortal sin of 
formulating an inference rule with an undecidable side condition and calling the result an axiomatization can get one into trouble :-).}%
%
\eqref{ax:mucont2A} is an infinitary Horn formula. To use it as a rule of inference, one would need to establish infinitely many premises of the form $x(ny.p)z \leq w$. But this in itself is a $\Pi_1^0$-complete problem. One can show that it is $\Pi_1^0$-complete to determine whether a given context-free grammar $G$ over a two-letter alphabet generates all strings. 
\fhcomment{Add textbook reference? E.g.~to your book?}%
By coding $G$ as a $\mu$-expression $w$, the problem becomes $\mu x.(1 + ax + bx) \leq w$, which by \eqref{ax:mucontA} is equivalent to showing that $nx.(1 + ax + bx)\leq w$ for all $n$.
\fhcomment{This is cool!}%

\section*{Acknowledgments}

We thank Zolt{\'an} {\'E}sik, Hans Lei\ss, and the anonymous referees for helpful comments.  The DIKU-affiliated authors express their thanks to the Department of Computer Science at Cornell University for hosting them in the Spring 2013 and to the Danish Council for Independent Research for financial support for this work under Project 11-106278, ``Kleene Meets Church (KMC): Regular Expressions and Types''.

\bibliographystyle{eptcs}
\bibliography{CFL}

\end{document}